\def\tsc#1{\csdef{#1}{\textsc{\lowercase{#1}}\xspace}}
\newtheorem{theorem}{Theorem}
\newtheorem{corollary}[theorem]{Corollary}
\newproof{proof}{Proof}
\tiny\color{gray},  
\newcommand{\varhyphen}[1]{{\operatorname{\mathit{#1}}}}
\definecolor{water}{HTML}{CCFFFF}
\definecolor{waterspout}{HTML}{99FFFF}
\definecolor{canary}{HTML}{FFFF99}
\definecolor{menthol}{HTML}{CCFF99}
\definecolor{light-red}{HTML}{FFCCCC}
\definecolor{peach-orange}{HTML}{FFCC99}
\begin{document}
\let\WriteBookmarks\relax
\def\floatpagepagefraction{1}
\def\textpagefraction{.001}

\shorttitle{A statistical significance testing approach for measuring term burstiness}

\shortauthors{S. Sarria Hurdato et~al.}

\title [mode = title]{A statistical significance testing approach for measuring term burstiness with applications to domain-specific terminology extraction}                      
%
%

%
\author[1]{Samuel {Sarria Hurtado}}[orcid=0009-0000-8719-7536]
\credit{Writing – original draft and review \& editing, Methodology, Numerical experiments, Results interpretation, Computer code, Visualization, Proofs}

\affiliation[1]{organization={University of Prince Edward Island},
    addressline={550 University Ave}, 
    city={Charlottetown},
    postcode={C1A 4P3}, 
    state={PE},
    country={Canada}}

\author[1]{Todd Mullen}[orcid=0000-0003-2818-2734]
\credit{Writing – review \& editing, Proofs}

\author[2]{Taku Onodera}[]
\credit{Writing – review \& editing, Notation}

\affiliation[2]{organization={The University of Tokyo},
    addressline={7-chome-3-1}, 
    city={Bunkyo City},
    postcode={113-8654}, 
    state={Tokyo},
    country={Japan}}

\author[1]{Paul Sheridan}[orcid=0000-0002-5484-1951]
\cormark[1]
\ead{paul.sheridan.stats@gmail.com}
\ead[URL]{https://paul-sheridan.github.io/}
\cortext[cor1]{Corresponding author}
\credit{Writing - original draft and review \& editing, Conceptualization, Methodology, Numerical experiments, Results interpretation, Proofs, Supervision, Funding acquisition}

\begin{abstract}
A term in a corpus is said to be ``bursty'' (or overdispersed) when its occurrences are concentrated in few out of many documents. In this paper, we propose Residual Inverse Collection Frequency (RICF), a statistical significance test inspired heuristic for quantifying term burstiness. The chi-squared test is, to our knowledge, the sole test of statistical significance among existing term burstiness measures. Chi-squared test term burstiness scores are computed from the collection frequency statistic (i.e., the proportion that a specified term constitutes in relation to all terms within a corpus). However, the document frequency of a term (i.e., the proportion of documents within a corpus in which a specific term occurs) is exploited by certain other widely used term burstiness measures. RICF addresses this shortcoming of the chi-squared test by virtue of its term burstiness scores systematically incorporating both the collection frequency and document frequency statistics. We evaluate the RICF measure on a domain-specific technical terminology extraction task using the GENIA Term corpus benchmark, which comprises 2,000 annotated biomedical article abstracts. RICF generally outperformed the chi-squared test in terms of precision at k score  with percent improvements of 0.00\% (P@10), 6.38\% (P@50), 6.38\% (P@100), 2.27\% (P@500), 2.61\% (P@1000), and 1.90\% (P@5000). Furthermore, RICF performance was competitive with the performances of other well-established measures of term burstiness. Based on these findings, we consider our contributions in this paper as a promising starting point for future exploration in leveraging statistical significance testing in text analysis.
\end{abstract}



\begin{keywords}
burstiness \sep inverse collection frequency \sep inverse document frequency \sep terminology extraction \sep significance testing
\end{keywords}

\maketitle

\section{Introduction}\label{sec:intro}
Not all terms occurring an equal number of times in a collection of textual documents are equally good for characterizing content. In a classical illustration of this point, \cite{Church1995} consider a collection of Associated Press Newswire articles in which the terms \emph{boycott} and \emph{somewhat} each occur roughly 1,000 times. They nevertheless point out that \emph{boycott} serves as a more effective keyword than \emph{somewhat}. The reason is that \emph{boycott} is a topical term the occurrences of which are statistically concentrated in a modestly sized subset of articles about consumer activism, whereas \emph{somewhat} is a low information term the occurrences of which are spread out across all of the articles in the collection. The term \emph{boycott} exhibits a relatively high degree of a quality known as ``burstiness''.

Term burstiness quantifies ``how peaked a word’s usage is over a particular corpus of documents''~\cite[p. 11]{Santini2021}. In a recent review, \cite{Sonning2022} surveyed common empirical measures of term burstiness, including key measures advanced by~\cite{Juilland1964}, \cite{Church1995,Church1999}, \cite{Katz1996}, \cite{Kwok1996}, \cite{Schafer2002}, \cite{Gries2008}, and~\cite{Irvine2017}. Term burstiness is a statistical feature of some importance in corpus profiling~(\cite{Schafer2002,Pierrehumbert2012,Irvine2017,Sharoff2017,Santini2018,Santini2021}). In the field of textual information retrieval, term burstiness has be leveraged for keyword extraction~(\cite{Church1995,Katz1996,Church1999,Matsuo2004}, document classification~(\cite{Madsen2005}), document retrieval~(\cite{Cummins2015,Cummins2017,Bahrani2023,Podder2023}), and sentiment analysis~(\cite{Najar2023}). While filtering out bursty terms when compiling term indexes is commonplace~(\cite{Sharoff2017}), they show promise as identifying well with technical terminology~(\cite{Santini2021}), a main application of which being the collection-level extraction of technical terms for building domain terminologies~(\cite{Firoozeh2020}). In image retrieval, \cite{Jegou2009} and \cite{Kato2022} have used the concept of burstiness to analyze visual words with encouraging results. Burstiness has also been employed for distinguishing human-authored text from text produced by large language models~(\cite{Wang2023}).

In this paper, we investigate the potential for statistical significance testing to improve characterizing term burstiness in the context of a $d>0$ document collection based on an $m>0$ term vocabulary. Term burstiness measures typically exploit one or both of two foundational word statistics. The one, the \emph{total document frequency} (TDF), denoted by~$b_i$, is the number of documents in a collection of interest containing the $i$'th term ($i=1,\ldots, m$). It is the basis for the famed \emph{inverse document frequency} (IDF) statistic $\log(d/b_i)$, where~$d$ is the number of documents in the collection. The other, the \emph{total term frequency} (TTF), denoted by~$n_i$, is the number of occurrences of the $i$'th term in the collection. \cite{Kwok1990} proposed the associated \emph{inverse collection frequency} (ICF) word statistic $\log(n/n_i)$. The value $n$ denotes the total number of terms in the collection. The heuristic proposed by~\cite{Church1995}, for instance, quantifies term burstiness as the ratio $n_i/b_i$. The chi-squared significance test, another common approach for quantifying term burstiness, incorporates TTF (or equivalently, ICF) but not TDF (or equivalently, IDF). It is, however, well-understood that TDF (or equivalently, IDF) is informative for assessing term burstiness.

We advance the first ever significance test for term burstiness exploiting the TDF word statistic (or equivalently, IDF). The test, which also depends on the TTF (or equivalently, ICF) word statistic, assigns term burstiness scores according to the negative logarithm of P-values representing deviations from what would be expected under the multinomial language model. While the test constitutes a theoretical advance, the calculation of test P-values is at present computationally infeasible outside of toy example sized corpora. This reality leads us to propose a test inspired heuristic, which we call Residual ICF (RICF), that approximates the test P-value for the $i$'th term as the null distribution expected value minus the observed ICF value. By leveraging both ICF and IDF, we hypothesize that the RICF measure offers a practical advance in term burstiness quantification in a significance testing framework.

To assess the effectiveness of the RICF measure, we evaluated it against established measures on a molecular biology terminology extraction task using the GENIA Term corpus benchmark of~\cite{Kim2003}. We also included the state-of-the-art keyword extractor KeyBERT, as recently advanced by~\cite{Grootendorst2020}, in our evaluations. Our findings show RICF to be competitive with established methods with no single method yielding the best performance across all experimental settings. In an additional analysis, we show that RICF outperforms established term burstiness measures at filtering out stopwords. These encouraging findings hint that future efforts to better approximate test P-values are liable to result in even more effective measures of term burstiness.

In formulating the RICF measure, we derived what to our knowledge is a hitherto unreported formula connecting ICF and IDF in expectation under the multinomial language model. The formula is required for evaluating RICF scores, and may prove to be of independent interest to text analysis practitioners.

The rest of the paper is organized as follows. Necessary background is provided in Section~\ref{sec:related-work}. Section~\ref{sec:research-objectives} states concrete research objectives. Section~\ref{sec:genia} introduces the GENIA Term corpus benchmark dataset. The exact significance test for term burstiness, and related RICF measure, is presented in Section~\ref{sec:methodology}. The formula relating IDF to ICF in expectation is also given in this section. Section~\ref{sec:results} presents the experimental design and results of the terminology extraction task on the GENIA Term corpus dataset. Section~\ref{sec:discussion} discusses practical and theoretical implications of our contributions together with ideas for future work. A summary of the paper is provided in Section~\ref{sec:conclusion}.

\section{Research objectives}\label{sec:research-objectives}
The chi-squared test, when deployed in a language modelling setting as done by~\cite{Matsuo2004}, is a traditional approach to quantifying term burstiness. The chi-squared test burstiness score for the $i$'th term in a collection depends on the TTF value (i.e., the $n_i$ term) of the $i$'th term, but not its TDF value (i.e., the $b_i$ term). TDF is, however, an information rich quantity that has been successfully exploited in traditional term burstiness heuristics~(\cite{Church1995}, and \cite{Irvine2017}). At the same time, the significance testing paradigm offers promise when it comes to the development of new and effective term burstiness measures, as significance tests use accurate and readily interpretable P-values (i.e., tail probabilities) to quantify term burstiness, as opposed to heurististically motivated formulas.

The main aim of this research is to develop an effective alternative to the traditional chi-squared test for quantifying term burstiness, an alternative rooted in statistical significance testing. Specifically, our research aims are outlined as follows:

\noindent\textbf{Research Objective 1} (RO1): Develop the first-ever significance test-based measure of term burstiness leveraging both TTF/ICF and TDF/IDF word statistics, and investigate the feasibility of using it in real-world text analysis applications.

\noindent\textbf{Research Objective 2} (RO2): Relate the foundational text analysis quantities IDF and ICF in expectation under the multinomial language model.

\noindent\textbf{Research Objective 3} (RO3): Using the relation from R02, develop a computationally efficient heuristic for test P-values that is practical for real-world text analysis applications.

\noindent\textbf{Research Objective 4} (RO4): Evaluate the performance of our proposed significance test inspired term burstiness heuristic on a technical terminology extraction task.

\noindent\textbf{Research Objective 5} (RO5): Evaluate the performance of our proposed significance test inspired term burstiness heuristic its ability to identify stopwords.

The primary novelty of this work lies in the application of sophisticated significance tests in the field of text analysis. Our contributions should prove of interest to the research community working on term weighting schemes, and information science practitioners in general. It is our hope that this work will serve as a testbed for the future exploration of leveraging significance tests in the field of text analysis.


\section{Related work}\label{sec:related-work}

\subsection{Modeling framework}\label{subsec:model}
In this section, we settle the notation used throughout the paper for textual document representation and modeling. Table~\ref{tab:notation} sets out the bulk of the notation used in this work.

\begin{table*}[th]
\centering
\begin{tabular}{p{2cm}p{12cm}}
Symbol & Definition \\
\midrule
$\mathcal{C}_{d,m}$ & A \emph{collection}, or \emph{corpus}, of $d$ documents made up of $m$ distinct terms \\ 
$d$ & \emph{Collection size}: Number of documents in the collection \\
$m$ & \emph{Vocabulary size}: Number of distinct terms from which the collection documents are composed \\
$n_{ij}$ & \emph{Term frequency} (TF): Number of times the $i$'th term occurs in the $j$'th document \\
$[n_{ij}]_{m \times d}$ & \emph{Term-document matrix}: An $m \times d$ matrix of TFs \\
$n_i = \sum_{j=1}^d n_{ij}$ & \emph{Total term frequency} (TTF): Number of times the $i$'th term occurs in the collection \\
$n_j = \sum_{i=1}^m n_{ij}$ & \emph{Document size}: Number terms in the $j$'th document, including multiplicities \\
$n = \sum_{j=1}^d n_j$ & Number of terms in the collection, including multiplicities  \\
$b_{ij}$ & \emph{Binary term frequency} (BTF): Indicator variable that is $1$ if the $i$'th term occurs in the $j$'th document and $0$ otherwise \\
$b_{i} = \sum_{j=1}^d b_{ij}$ & \emph{Total document frequency} (TDF): Number of documents containing the $i$'th term \\
$df(i) = b_i / d$ & \emph{Document frequency} (DF): Proportion of documents containing the $i$'th term \\
$cf(i) = n_i / n$ & \emph{Collection frequency} (CF): Proportion of the $i$'th term in the collection \\
$\mu$ & Expected document length under the multinomial language model \\
$\theta_{ij}=\theta_i (\forall j)$ & Probability, under the multinomial language model, of the $i$'th term in $j$'th document subject to the constraint $\sum_{i=1}^m \theta_i = 1$ \\
$\rho_i$ & Shorthand for $1 - \theta_i$ \\
$\phi_{ij}$ & The probability $1 - (1 - \theta_i)^{n_j}$ that, under the multinomial language model, the $i$'th term occurs at least once in the $j$'th document  \\ 
$\psi_i$ & Shorthand for $e^{\mu \theta_i} - 1$\\
\midrule
\end{tabular}
\caption{Multinomial language model for document modeling definitions and notation.}
\label{tab:notation}
\end{table*}

Consider a collection, $\mathcal{C}_{d,m}$, of $d>0$ documents composed from a vocabulary of $m>0$ distinct terms. We assume the multinomial language model (a.k.a., bag-of-words model or unigram model) for document representation. In this model, the $j$'th document ($1 \leq j \leq d$) is represented as the vector of counts $\langle n_{1j}, \ldots, n_{mj} \rangle$, where $n_{ij} \geq 0$ denotes the number of times the $i$'th term ($1 \leq i \leq m$) occurs in the $j$'th document. The quantity $n_{ij}$ is called the \emph{term-in-document frequency}, or \emph{term frequency} (TF) for short, of the $i$'th term in the $j$'th document. TFs are organized in an $m$~by~$d$ \emph{term-document matrix}, $[n_{ij}]_{m \times d}$, each row of which corresponds to a term, and each column to a document. This matrix representation of a collection of documents is a textbook starting point for analysis.

The multinomial language model takes the document size $n_j$ ($1 \leq j \leq d$) as a fixed number of trials. Assume $N_{ij}$ ($1 \leq i \leq m$) to be a random variable denoting the number of occurrences of the $i$'th term in the $j$'th document. According to the model, the vector of TFs $\langle N_{1j}, \ldots, N_{mj} \rangle$ for the $j$'th document is multinomially distributed according to
\begin{equation*}
Pr(N_{1j}=n_{1j}, \ldots, N_{mj}=n_{mj}) = \frac{n_j!}{\prod_{i=1}^m n_{ij}!} \prod_{i=1}^m \theta_{i}^{n_{ij}},
\end{equation*}
where $n_j=\sum_{i=1}^m n_{ij}$ is the size of the $j$'th document, and $\langle \theta_1, \ldots, \theta_m \rangle$ is a vector of Bernoulli success probabilities satisfying $\sum_{i=1}^m \theta_i = 1$.

\subsection{Term weighting models}\label{subsec:term-weightings}
A \emph{term weighting model}, or \emph{term weighting}, is a function assigning a nonnegative, real-valued score to the $i$'th term in the $j$'th document in a collection. The higher a term scores in a document, the better it is considered to represent the document’s content. The development of novel term weighting models is an active area of research in text analysis~\cite{Rathi2023}.

\emph{Term frequency–inverse document frequency} (TF–IDF) is a foundational term weighting model. Proposed by~\cite{Salton1973}, it scores the importance of the $i$'th term in the $j$'th document according to the formula $\varhyphen{tf-idf}(i,j) = n_{ij} \times \log(d/b_i)$. In so doing, the TF-IDF score $\varhyphen{tf-idf}(i,j)$ represent the importance of the $i$'th term specifically within the $j$'th document while also considering the $i$'th term's importance across the entire collection. TF–IDF is an archetypal example of the general term weighting functional form: the product of a local weight (e.g., $n_{ij}$), a global weight (e.g., $\log(d/b_i)$), and a document length normalization factor (e.g., trivially $1$ in this case); see~\cite{Polettini2004} for more details.

\subsubsection{Local term weighting models}
The local weighting term quantifies the importance of the $i$'th term to the $j$'th document without regard to other documents in the collection. TF and its binarization, BTF, are two classical local weightings~( see~\cite{Rathi2023}). Common local weightings were recently surveyed by~\cite{Domeniconi2016} and~\cite{Ghahramani2021}. \cite{Dogan2019} argue that the choice of local weighting model plays an important role in term weighting function design.

\subsubsection{Global term weighting models}
Unlike local term weightings, which operate on a document-level scope, global term weightings take into account the entire document collection to quantify term importance~(\cite{Alshehri2023}). Two traditional global term weightings factor prominently into this work: the IDF and ICF word statistics. 

\textbf{IDF}: The \emph{inverse document frequency} (IDF) of the $i$'th term in a collection is defined as
\begin{equation*}
idf(i) = \log(d / b_i).
\end{equation*}
Never mind that the IDF formula specifies a logarithmically scaled inverse
proportion, and not an inverse frequency. IDF has been a workhorse in text analysis ever since it was advanced by~\cite{SparckJones1972} some half-century ago. The intuition behind IDF is that terms occurring in few documents tend to be more informative and can help distinguish documents from each other. Terms that are common across many documents (e.g., \emph{the}, \emph{of}, \emph{are}, etc.) provide comparatively lower discriminative power and are assigned lower IDF~scores.

\textbf{ICF}: Proposed by~\cite{Kwok1990}, the \emph{inverse collection frequency} (ICF) of the $i$'th term in a collection is defined as
\begin{equation*}
icf(i) = \log(n / n_i),
\end{equation*}
where $n/n_i$ is reciprocal of the proportion the $i$'th term makes up out of all the terms in the collection, counting multiplicities. This IDF cousin has been successfully deployed in language modeling~(\cite{Ponte1998}), document clustering~(\cite{Reed2006}), and document retrieval~(\cite{Amati2002,Kwok1996,Pirkola2001,Pirkola2002,Jimenez2018}).

Term weighting models are generally devised by pairing an existing local weighting together with a novel global one, typically appealing to cosine similarity as a normalizing factor~(\cite{Rathi2023}). Underlying term weighting model construction is a plug-and-play principle: namely, substitute the IDF term in TF–IDF with an alternate global weighting, and evaluate the new term weighting model on such standard text analysis tasks as document retrieval or document classification. TF–IDF variants have been advanced in supervised~(\cite{Martineau2009,Ren2013,Chen2016,Domeniconi2016,Dogan2019,Carvalho2020,Alshehri2023}) and unsupervised~(\cite{Robertson2009,Sabbah2017,Jimenez2018,Alsmadi2019,Campos2020}) contexts.

\subsection{Term burstiness measures} \label{subsec:bursty-measures}
A burstiness measure amounts to a global term weighting that quantifies term frequency dispersion at the collection level. IDF has been marshalled as a simple measure of term burstiness~(\cite{Schafer2002,Irvine2017}), as has ICF~(\cite{Kwok1996}). More elaborate measures have been proposed by~\cite{Juilland1964}, \cite{Church1995,Church1999}, \cite{Katz1996}, \cite{Gries2008}, and~\cite{Irvine2017}. In what follows, we describe the measures of term burstiness that are used in this study. These measures we summarize in Table~\ref{tab:bursty-measures} for convenience. 

\begin{table*}[ht]
\setlength{\tabcolsep}{4pt}
\centering
\begin{tabular}{lllr}
Name & Abbreviation & Formula & References \\
\midrule
Chi-squared test & Chi-sq & $\chi_{d-1}^2(i) = \sum_{j=1}^d \frac{(n_{ij} - n_i / d)^2}{n_i / d}$ & \cite{Matsuo2004} \\
Church and Gale & CG & $cg(i) = n_i / b_i$ & \cite{Church1995} \\
Irvine and Callison-Burch & ICB & $icb(i) = \frac{1}{b_i}\sum_{j=1}^d \frac{n_{ij}}{n_j}$ & \cite{Irvine2017} \\
Deviation of Proportions & DoP & $dop(i) = 1 - \frac{1}{2}\sum_{j=1}^d \lvert \frac{n_{ij}}{n_i} - \frac{n_j}{n} \rvert$ & \cite{Gries2008} \\
\midrule
\end{tabular}
\caption{Published measures of term burstiness used in this study.}
\label{tab:bursty-measures}
\end{table*}

\textbf{Chi-squared test} (Chi-sq): \cite{Gries2008} details a classical adaptation, tracing back to~\cite{Nagao1976} and popularized by~\cite{Matsuo2004}, of the chi-squared test of statistical significance to term burstiness quantification. Under the null hypothesis, the $i$'th term is assumed to be uniformly distributed across the $d$ documents in the corpus, making $n_i/d$ the expected number of occurrences of the $i$'th term in $j$'th document. For large samples, the P-value associated with the chi-squared test statistic
\begin{equation*} \label{eqn:chisq-score}
\chi_{d-1}^2(i) = \sum_{j=1}^d \frac{(n_{ij} - n_i / d)^2}{n_i / d}
\end{equation*}
with $d-1$ degrees of freedom. The test quantifies the degree to which the observed TFs for the $i$'th term (i.e., $n_{i1}, n_{i2}, \ldots, n_{id}$) deviate from uniformity. The larger the negative logarithm of the P-value, the higher the bursty score.

\textbf{Church and Gale} (CG): In early work, \cite{Church1995} empirically quantify the burstiness of the $i$'th term in a collection as its TTF to BTF ratio:
\begin{equation*} \label{eqn:cg-score}
cg(i) = n_i / b_i.
\end{equation*}
The intuition behind the formula being that a bursty term will occur many times (i.e., $n_i$ large) in few documents (i.e., $b_i$ small), resulting in a large score (i.e., $n_i / b_i$). Conversely, frequently occurring words with low information content ought to be assigned low scores. The measure is rank equivalent to the difference of IDF and ICF: $idf(i) - icf(i)=\log(n_i/b_i)-\log(n/d)$.

\textbf{Irvine and Callison-Burch} (ICB): \cite{Irvine2017} proposed this modification of the CG measure
\begin{equation*} \label{eqn:icb-score}
icb(i) = \frac{1}{b_i}\sum_{j=1}^d \frac{n_{ij}}{n_j}
\end{equation*}
to account for varying document lengths.

\textbf{Deviation of Proportions} (DoP): Finally, \cite{Gries2008} advanced the term burstiness measure
\begin{equation*} \label{eqn:gries}
dop(i) = 1 - \frac{1}{2}\sum_{j=1}^d \biggl\lvert \frac{n_{ij}}{n_i} - \frac{n_j}{n} \biggl\rvert
\end{equation*}
as an improvement over the archaic measures of~\cite{Juilland1964}, \cite{Carroll1970}, \cite{Rosengren1971}, and~\cite{Engwall1974}.

We direct the reader to the work of~\cite{Gries2008} and of~\cite{Sharoff2017} for more in depth reviews on the subject word frequency dispersion quantification.

\subsection{Domain-specific terminology extraction}
Terminology extraction methods aim to automatically pinpoint and extract specialized terms in textual corpora composed of documents derived from a particular subject area~(\cite{Wang2016}). Well-extracted terms are technical in nature and hold significant importance within their respective domains~(\cite{Nomoto2022}). Extracted domain-specific terms have been used for such downstream text analysis applications as clinical text processing by \cite{Zhang2016}, sentiment analysis by \cite{Abulaish2022}, and knowledgebase construction by \cite{Khosla2019}.

Domain-specific term extraction is strongly tied to conventional document-specific keyword extraction. For example, the chi-squared test of \cite{Matsuo2004} was originally developed for identifying document keywords by representing a document of interest as a corpus of its sentences. Only later was the test repurposed to identify corpus-level terms~(\cite{Gries2008,Zhang2016}). This is generally possible of keyword extraction methods. \cite{Zhang2016}, for instance, adapted the keyword extractor RAKE of \cite{Rose2010} to act at the collection level by replacing any document-level term counts in the methodology by collection-level ones. We take a similar tack, described below, in applying the keyword extractor KeyBERT of \cite{Grootendorst2020} to corpora. \cite{Abulaish2022} review the current state of domain-specific keyword extraction methods. In their recent review, \cite{Nadim2023} find that KeyBERT outperforms other state-of-the-art keyword extraction methods.

\section{Data} \label{sec:genia}
The GENIA Term corpus, developed by \cite{Kim2003}, is a collection of annotated biomedical texts that is widely used as a benchmark to evaluate keyword extraction tools~(\cite{Tran2023}). In our experiments, we use the latest release, Version 3.02, which consists of 2,000 article abstracts extracted from the MEDLINE database. The corpus comprises roughly 450,000 total tokens, out of which 98,748 are annotated as biological terms.

\begin{table*}[ht]
\centering
\begin{tabular}{lcrrrr}
Semantic class & Class color & Sub-classes & Unique terms & Annotations \\
\midrule
\textsf{Amino Acid} & \cellcolor{water} water & 15 & 10,159 & 42,478 \\
\textsf{Nucleotide} & \cellcolor{waterspout} waterspout & 12 & 5,574 & 11,619 \\
\textsf{Multi-cell} & \cellcolor{canary} canary & 5 & 1,444 & 5,247 \\
\textsf{Cell} & \cellcolor{menthol} menthol & 4 & 4,051 & 11,626 \\
\textsf{Other} & \cellcolor{light-red} light-red & 1 & 10,560 & 19,999 \\
\midrule
Total & & 37 & 31,788 & 90,969
\end{tabular}
\caption{GENIA Term corpus color-coded, high-level semantic classes, numbers of sub-classes, and post-preprocessing biological term and annotation counts.}
\label{tab:genia-stats}
\end{table*}

Table~\ref{tab:genia-stats} shows an overview of the GENIA Term corpus data. Annotations are labeled according to 37 semantic concepts (i.e., sub-classes), which are grouped into six color-coded semantic classes, representing high-level biological constituents. The biological term and annotation counts reported in the table are those obtained after running the preprocessing steps described in Section~\ref{subsec:preprocessing}.

\section{Methodology}\label{sec:methodology}
Here we describe a novel statistical significance testing strategy for quantifying term burstiness. The strategy makes use of an equation, which we derive, relating IDF to ICF in expectation that may prove of some independent interest to the reader.

\subsection{An exact test of statistical significance for term burstiness}\label{subsec:bursty-test}
\cite{Madsen2005} points out that the multinomial language model does not account for term burstiness. It is natural, however, to model term burstiness in a statistical significance testing framework that takes the TFs in a corpus being multinomially distributed as null hypothesis. In what follows we propose such a test.

The basic idea behind the test is to compare the observed value of $n_i$ with its expected value conditional on~$b_i$ under the multinomial language model. The null hypothesis is $\mathcal{H}_0: n_i = \mathbf{E}[N_i=n_i|B_i=b_i]$, where $N_i$ and $B_i$ are random variables denoting the TF and TDF of the $i$'th term, respectively. The P-value associated with observing an outcome at least as extreme as the conditionally expected one specified in $\mathcal{H}_0$ is given by the tail probability
\begin{equation} \label{eqn:bursty-tail-prob}
  Pr(N_i \geq n_i | B_i = b_i)= \frac{\mathlarger{\sum}_{k=n_i}^{n} \mathlarger{\sum}_{u=1}^{U_i(k)} \mathlarger{\prod}_{j=1}^{d} \mathlarger{\sum}_{v=1}^{V_j(u)} \frac{n_j!\theta_1^{y_{1j}^{(v)}} \cdots \theta_{i-1}^{y_{i-1,j}^{(v)}}\theta_i^{x_{ij}^{(u)}}\theta_{i+1}^{y_{i+1,j}^{(v)}}\cdots \theta_m^{y_{mj}^{(v)}}}{y_{1j}^{(v)}!\cdots y_{i-1,j}^{(v)}!x_{ij}^{(u)}!y_{i+1,j}^{(v)}!\cdots y_{mj}^{(v)}!}}{\sum_{A \in F(b_i)} \prod_{r \in A}\phi_{ir} \prod_{s \in A^c}(1-\phi_{is})}.
\end{equation}
This constitutes an exact test of statistical significance. The formula amounts to summing the probability mass of all term-document matrices satisfying the condition $N_i \ge n_i | B_i = b_i$. The calculation of this probability is somewhat involved, as Eq.~(\ref{eqn:bursty-tail-prob}) attests.

The numerator in Eq.~(\ref{eqn:bursty-tail-prob}) is the sum of joint probabilities $\sum_{k=n_i}^n Pr(N_i = k, B_i = b_i)$. The vector $\langle x_{i1}^{(u)},\ldots,x_{ij}^{(u)},\allowbreak\ldots,x_{id}^{(u)} \rangle$ is the $u$'th integer solution ($1 \leq u \leq U_i(k)$) to
\begin{equation*}
k = x_{i1}^{(u)} + \ldots + x_{ij}^{(u)} + \ldots + x_{id}^{(u)},
\end{equation*}
where $k$ is an integer from $n_i$ to $n$, $0 \leq x_{ij} \leq k$ for $1 \leq j \leq d$, and exactly $b_i$ of the $x_{ij}$'s are nonzero. The $U_i(k)$ term is the number of solutions to the equation for a given $i$ and $k$. The related vector $\langle y_{1j}^{(v)},\ldots,y_{i-1,j}^{(v)},y_{i+1,j}^{(v)},\ldots,y_{mj}^{(v)} \rangle$ is the $v$'th integer solution ($1 \leq v \leq V_j(u)$) to
\begin{equation*}
n_j = y_{1j} + \ldots + y_{i-1,j} + x_{ij}^{(u)} + y_{i+1,j} + \ldots + y_{mj}
\end{equation*}
such that $0 \leq y_{\ell j} \leq min(n_\ell,n_j)$ for $\ell = 1, \ldots, i-1, i+1,\ldots,m$. The term $V_j(u)$ is defined similarly to $U_i(k)$.

The denominator in Eq.~(\ref{eqn:bursty-tail-prob}) is the probability $Pr(B_i = b_i)$ that the $i$'th term occurs in exactly $b_i$ documents. The random variable $B_i$ is Poisson binomially distributed (see Appendix~\ref{appendix:eidf}). In the PMF for $B_i$, $F(b_i)$ denotes the set of all subsets of $b_i$ integers that can be selected from $\mathbb{Z}_d = \left\{1,2,\ldots,d\right\}$. The set $A$ is an element of $F(b_i)$, and $A^c$ is the complement of $A$ with respect $\mathbb{Z}_d$. The term $\phi_{ij} = 1 - \rho_i^{n_j}$ is the probability that the $i$'th term occurs at least once in the $j$'th document.

\subsection{A relationship connecting IDF and ICF in expectation} \label{subsec:icf-idf-relation}
Given the combinatorial nature of the test P-value, its evaluation is computationally intractable outside the smallest of toy examples. To address this challenge, we derive an equation relating IDF to ICF in expectation, under the multinomial model, that we leverage in constructing a natural test P-value proxy.

Under the multinomial model, we find that the expected value of IDF is related to that of ICF as
\begin{equation} \label{eqn:idf-icf-expected-relation}
\mathbf{E}[icf(i)] \approx \mathbf{E}[idf(i)] - \log(\theta_i / \psi_{i}) - \frac{\mu\theta_i - (1 - \theta_i)\psi_{i}}{2d\mu \theta_i \psi_{i}} - \mu \theta_i + o(d^{-2}),
\end{equation}
where $\psi_{i} = e^{\mu \theta_i} - 1$ and $\mu=\mathbf{E}[N_j]$ is the expected document length. The relation uses the approximation $(1-\theta_i)^{n_j} \approx e^{-n_j \theta_i} \approx e^{\mu\theta_i}$. This is reasonable when $\theta_i \approx 0$ and the document lengths are well-behaved. Exact formulae for $\mathbf{E}[icf(i)]$ and $\mathbf{E}[idf(i)]$ are derived in Appendix~\ref{appendix:eicf} and Appendix~\ref{appendix:eidf}, respectively. The main trick is to apply the second order Taylor expansion $\mathbf{E}[\log(X)] \approx \log\mathbf{E}[X] - \mathbf{V}[X]/(2\mathbf{E}[X]^2)$ to each of $\mathbf{E}[idf(i)]=\mathbf{E}[\log(d / B_i)]$ and $\mathbf{E}[icf(i)]=\mathbf{E}[\log(n / N_i)]$. Deriving Eq.~(\ref{eqn:idf-icf-expected-relation}) from $\mathbf{E}[icf(i)]$ and $\mathbf{E}[idf(i)]$ is an exercise in basic algebraic manipulations.

\begin{figure}[!ht]
\centering
\includegraphics[width=0.9\columnwidth]{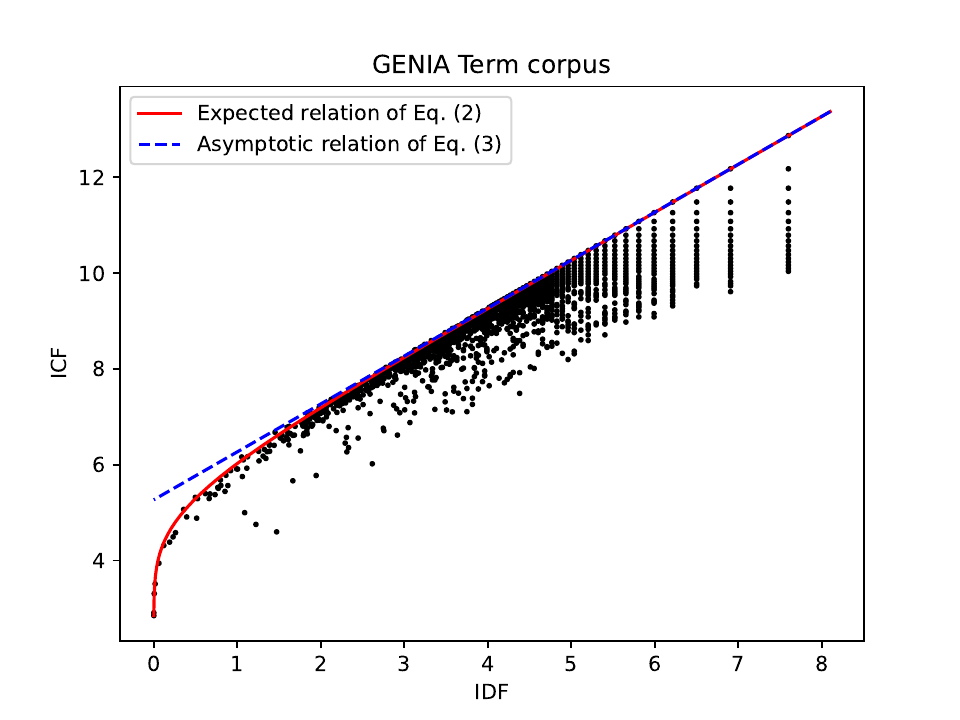}
\caption{GENIA Term corpus IDF versus ICF plot.}
\label{fig:genia-idf-vs-icf}
\end{figure}

It is interesting to note that the relation reduces to the simple linear function
\begin{equation} \label{eqn:idf-icf-expected-relation-asymp}
\mathbf{E}[icf(i)] \approx \mathbf{E}[idf(i)] + \frac{\mu / 2 - 1}{2d\mu} + \log(\mu) + o(d^{-2})
\end{equation}
in the limit as $\theta_i$ approaches $0$ (i.e., for vanishingly rare terms).

In Figure~\ref{fig:genia-idf-vs-icf}, the expected relations of Eqs.~(\ref{eqn:idf-icf-expected-relation})~and~(\ref{eqn:idf-icf-expected-relation-asymp}) are verified on the GENIA Term corpus dataset. Empirical term IDF scores are plotted against their corresponding ICF scores (black dots). The expected curve of Eq.~(\ref{eqn:idf-icf-expected-relation}) is overlain red, and the asymptotic expected curve of Eq.~(\ref{eqn:idf-icf-expected-relation-asymp}) in blue. Outlier terms falling below the red curve are commonly thought of as bursty terms.

\subsection{Significance test inspired term burstiness heuristic}\label{subsec:butsty-term-heuristic}
The negative logarithm of the significance test P-value is a natural choice of term burstiness measure, as it takes on values in $[0,\infty)$ with the more bursty the term, the higher the score. As we have stated, however, the computation of test P-values is generally intractable.

Motivated by practicality, we propose the residual-like heuristic, which we dub Residual ICF (RICF), 
\begin{equation} \label{eqn:ricf}
  ricf(i) = \mathbf{E}[icf(i)|B_i=b_i] - icf(i)
\end{equation}
for quantifying the burstiness of the $i$'th term in a collection. This difference between the conditionally expected ICF score of the $i$'th term, conditional on~$b_i$, and the observed one serves as a workable proxy for the negative logarithm of~Eq.~(\ref{eqn:bursty-tail-prob}).

The primary merit of Eq.~(\ref{eqn:ricf}) is that it is efficient to compute. Appendix~\ref{appendix:cond-expectation-algorithm} describes a simple numerical estimation method for approximating $\mathbf{E}[icf(i)|B_i=b_i]$. That terms scoring more highly are definitionally more bursty is a convenient bonus. A demerit is that by incorporating only information from the first moment of $N_i|B_i=b_i$ into RICF, the generated rankings are bound to be less accurate than those obtained from computing the corresponding P-value negative logarithm.

\section{Experimental results}\label{sec:results}
In this section, we evaluate our proposed term burstiness measure at a variety of domain-specific terminology extraction tasks using the GENIA Term corpus benchmark.

\subsection{Data preprocessing} \label{subsec:preprocessing}
We performed a number of preprocessing steps on the raw GENIA Term corpus data to prepare it for analysis. The preprocessing code is implemented in Python~3.10.12. In addition to standard Python libraries, we made use of the natural language processing libraries \textbf{nltk}~3.8.1, \textbf{bs4}~0.0.1, \textbf{inflect}~7.0.0, \textbf{html5lib}~1.1, and \textbf{lxml}~4.9.3. The total number of terms after preprocessing is $n=362,587$, and the vocabulary size $m=40,804$. In all, $90,969$ terms are labeled from among $31,788$ distinct biological terms. 

The initial data preprocessing step involved extracting article texts from the corpus XML file. Article entries contain a MEDLINE~ID, title, and sentence segmented abstract text. For each article, we merged the title text and abstract text. We removed the sentence ``(ABSTRACT TRUNCATED AT 250 WORDS)'' from 24 abstracts, and ``(ABSTRACT TRUNCATED AT 400 WORDS)'' from three abstracts.

Preprocessing encoded biological terms proved a tricky operation. Consider the following annotated sentence taken from the article with MEDLINE ID 95280917:
\begin{lstlisting}[belowskip=-1.0 \baselineskip]
In this report, we show that the minimal <cons lex="CD4_promoter" sem="G#DNA_domain_or_region"><cons lex="CD4" sem="G#protein_molecule">CD4</cons> promoter</cons> has four <cons lex="factor_binding_site" sem="G#DNA_domain_or_region">factor binding sites</cons>, each of which is required for full function.
\end{lstlisting}
Technical biological terms are marked-up with \textsf{cons} (i.e., biological constituent) tags. The contiguous text \emph{factor binding sites} illustrates the simple case when a sequence of consecutive terms, or even a single term, is enclosed by \textsf{cons} tags. Here, the text \emph{factor binding sites} is marked-up with the \textsf{lex} (i.e., lexical unit) tag factor\_binding\_site, where the associated \textsf{sem} (i.e., semantic class) tag flags factor\_binding\_site as member of the G\#DNA\_domain\_or\_region semantic class. However, many biological terms are recursively annotated. For instance, the \emph{CD4} term annotation (i.e., \textsf{lex}=CD4, \textsf{sem}=G\#protein\_molecule) is nested inside the \emph{CD4 promoter} annotation (i.e., \textsf{lex}=CD4\_promoter, \textsf{sem}= G\#DNA\_domain\_or\_region). We dropped two annotations found to lack lexical unit entries. There is a special Coordinated semantic class used for annotating biological terms across discontiguous regions of text. We dropped Coordinated class marked-up text from the analysis as it is troublesome to process and makes up only 1.6\% of all annotations.

Preprocessing the above sentence according to our procedure yields:
\begin{lstlisting}[belowskip=-1.0 \baselineskip]
in this report we show that the minimal |\colorbox{waterspout}{CD4\_promoter\_lex}| |\colorbox{water}{CD4\_lex}| has four |\colorbox{waterspout}{factor\_binding\_site\_lex}| each of which is required for full function
\end{lstlisting}
Annotated raw text is replaced by lexical units postfixed with ``\_lex''. The postfixing step is necessary to disambiguate annotated from unannotated text. For example, the lexical unit protease would be otherwise indistinguishable from unannotated usages of the term \emph{protease}. We included all lexical units recursively extracted nested \textsf{cons} tags. For unannotated text, we removed any non-ASCII terms, then converted terms to lowercase, then removed any punctuation marks, then substituted textual representations for any terms representing integers. We subjected all corpus documents to this procedure.

\subsection{Terminology extraction task}
To evaluate our proposed RICF measure, we extracted gold standard molecular biology terms from the preprocessed GENIA Term corpus dataset.

We conducted six separate experiments. In the main experiment, we take as ground truth lexical units from all five of the high-level semantic classes \textsf{Amino Acid}, \textsf{Nucleotide}, \textsf{Multi-cell}, \textsf{Cell}, and \textsf{Other}. The remaining five experiments are more fine-grained, each restricting ground truth terms to lexical units from a single high-level semantic class.

For each experiment, we compare RICF performance, as judged by the \textit{Precision at k} (P@k) metric, with the corresponding performances of five baseline methods: Chi-sq, CG, ICB, DoP, and KeyBERT. Chi-sq, CG, ICB, and DoP are the conventional term burstiness measures outlined in Section~\ref{subsec:bursty-measures}. KeyBERT, on the other hand, is a transformer-based document keyword extraction tool. We applied KeyBERT to our present setting by having it treat the entire collection as a single document. The KeyBERT N-gram range parameter we set to 1 (i.e., extract single terms as opposed to keyphrases) to put it on an equal footing with the term burstiness measures used in this study.

\begin{table*}
\centering
\begin{tabular}{lccccccrr}
\multicolumn{1}{c}{} & \multicolumn{5}{c}{\textbf{Baseline}} & \multicolumn{3}{c}{\textbf{Proposed}} \\
\cmidrule(rl){2-6} \cmidrule(rl){7-9}
\textbf{k} & {Chi-sq} & {CG} & {ICB} & {DoP} & {KeyBERT} & {RICF} & {\begin{tabular}{@{}c@{}}Improvement \\ over Chi-sq\end{tabular}} & {\begin{tabular}{@{}c@{}}Improvement \\ overall\end{tabular}} \\
\midrule
\multicolumn{9}{c}{All semantic classes} \\
\midrule
10 & \underline{\bf{1.0000}} & \underline{\bf{1.0000}} & \underline{\bf{1.0000}} & 0.0000 & \underline{\bf{1.0000}} & \bf{1.0000} & 0.00\% & 0.00\% \\
50 & 0.9400 & 0.9600 & \underline{\bf{1.0000}} & 0.0200 & \underline{\bf{1.0000}} & \bf{1.0000} & 6.38\% & 0.00\% \\
100 & 0.9400 & 0.9800 & 0.9800 & 0.0200 & \underline{\bf{1.0000}} & \bf{1.0000} & 6.38\% & -1.00\% \\
500 & 0.9680 & 0.9820 & 0.9740 & 0.0880 & \underline{\bf{1.0000}} & 0.9900 & 2.27\% & -1.60\% \\
1000 & 0.9590 & 0.9810 & 0.9640 & 0.1420 & \underline{\bf{1.0000}} & 0.9840 & 2.61\% & -1.30\% \\
5000 & 0.9146 & 0.9278 & 0.8972 & 0.3754 & \underline{\bf{1.0000}} & 0.9320 & 1.90\% & -6.80\% \\
\midrule
\multicolumn{9}{c}{\colorbox{water}{\textsf{Amino Acid} semantic class}} \\
\midrule
10 & 0.6000 & \underline{\bf{1.0000}} & 0.8000 & 0.0000 & 0.3000 & \bf{1.0000} & 66.67\% & 0.00\% \\
50 & 0.6000 & \underline{0.7600} & 0.7000 & 0.0200 & 0.1600 & \bf{0.8000} & 33.33\% & 5.26\% \\
100 & 0.5800 & \underline{0.8200} & 0.7400 & 0.0100 & 0.2000 & \bf{0.8300} & 43.10\% & 1.22\% \\
500  & 0.5500 & \underline{\bf{0.6900}} & 0.6220 & 0.0480 & 0.2340 & \bf{0.6900} & 25.45\% & 0.00\% \\
1000 & 0.5300 & \underline{\bf{0.6450}} & 0.5900 & 0.0700 & 0.2770 & 0.6420 & 21.13\% & -0.47\% \\
5000 & \underline{\bf{0.4296}} & 0.4288 & 0.4106 & 0.0163 & 0.2714 & 0.4292 & -0.09\% & -0.09\% \\
\midrule
\multicolumn{9}{c}{\colorbox{waterspout}{\textsf{Nucleotide} semantic class}} \\
\midrule
10 & \underline{\bf{0.3000}} & 0.0000 & 0.1000 & 0.0000 & 0.1000 & 0.0000 & -100.00\% & -100.00\% \\
50 & 0.1400 & 0.1000 & 0.1200 & 0.0000 & \underline{\bf{0.3000}} & 0.1000 & -28.57\% & -66.67\% \\
100 & 0.1500 & 0.1000 & 0.1000 & 0.0000 & \underline{\bf{0.3100}} & 0.1100 & -26.67\% & -64.52\% \\
500 & 0.1820 & 0.1420 & 0.1460 & 0.0040 & \underline{\bf{0.2840}} & 0.1400 & -23.08\% & -50.70\% \\
1000 & 0.1720 & 0.1390 & 0.1400 & 0.0140 & \underline{\bf{0.2240}} & 0.1370 & -20.35\% & -38.84\% \\
5000 & 0.1534 & \underline{0.1550} & 0.1482 & 0.0502 & 0.1494 & \bf{0.1558} & 1.56\% & 0.52\% \\
\midrule
\multicolumn{9}{c}{\colorbox{canary}{\textsf{Multi-cell} semantic class}} \\
\midrule
10 & \underline{\bf{0.0000}} & \underline{\bf{0.0000}} & \underline{\bf{0.0000}} & \underline{\bf{0.0000}} & \underline{\bf{0.0000}} & \bf{0.0000} & NA & NA \\
50 & 0.0000 & \underline{\bf{0.0400}} & \underline{\bf{0.0400}} & 0.0000 & 0.0000 & \bf{0.0400} & Inf & 0.00\% \\
100 & 0.0000 & \underline{\bf{0.0200}} & \underline{\bf{0.0200}} & 0.0000 & 0.0000 & \bf{0.0200} & Inf & 0.00\% \\
500 & 0.0300 & \underline{0.0340} & 0.0320 & 0.0060 & 0.0000 & \bf{0.0380} & 26.67\% & 11.76\% \\
1000 & 0.0320 & 0.0370 & \underline{\bf{0.0410}} & 0.0011 & 0.0000 & 0.0350 & 9.38\% & -14.63\% \\
5000 & \underline{0.0432} & 0.0426 & 0.0414 & 0.0202 & 0.0072 & \bf{0.0442} & 2.31\% & 2.31\% \\
\midrule
\multicolumn{9}{c}{\colorbox{menthol}{\textsf{Cell} semantic class}} \\
\midrule
10 & \underline{\bf{0.1000}} & 0.0000 & 0.0000 & 0.0000 & 0.0000 & 0.0000 & -100.00\% & -100.00\% \\
50 & \underline{\bf{0.0600}} & 0.0000 & 0.0000 & 0.0000 & 0.0000 & 0.0000 & -100.00\% & -100.00\% \\
100 & \underline{\bf{0.0600}} & 0.0100 & 0.0200 & 0.0100 & 0.0000 & 0.0100 & -83.33\% & -83.33\% \\
500 & \underline{\bf{0.0780}} & 0.0360 & 0.0600 & 0.0200 & 0.0440 & 0.0400 & -48.72\% & -48.72\% \\
1000 & \underline{\bf{0.0820}} & 0.0580 & 0.0690 & 0.0260 & 0.0600 & 0.0620 & -24.39\% & -24.39\% \\
5000 & \underline{0.0958} & 0.0992 & 0.0988 & 0.0514 & 0.1540 & \bf{0.1002} & 4.59\% & 4.59\% \\
\midrule
\multicolumn{9}{c}{\colorbox{light-red}{\textsf{Other} semantic class}} \\
\midrule
10   & 0.0000 & 0.0000 & 0.1000 & 0.0000 & \underline{\bf{0.6000}} & 0.0000 & -100.00\% & -100.00\% \\
50   & 0.1400 & 0.0600 & 0.1400 & 0.0000 & \underline{\bf{0.5400}} & 0.0600 & -57.14\% & -88.89\% \\
100  & 0.1500 & 0.0300 & 0.1000 & 0.0000 & \underline{\bf{0.4900}} & 0.0300 & -80.00\% & -93.88\% \\
500  & 0.1460 & 0.0800 & 0.1140 & 0.0100 & \underline{\bf{0.4380}} & 0.0820 & -43.84\% & -81.28\% \\
1000 & 0.1370 & 0.1020 & 0.1240 & 0.0210 & \underline{\bf{0.4390}} & 0.1080 & -21.17\% & -75.40\% \\
5000 & 0.1926 & 0.2022 & 0.1982 & 0.0906 & \underline{\bf{0.4180}} & 0.2026 & 5.19\% & -51.53\% \\
\midrule
\end{tabular}
\caption{GENIA Term corpus P@k scores accompanied by percent improvement of RICF over the chi-squared test and the top scoring baseline method.}
\label{tbl:genia-p-at-k}
\end{table*}

Table~\ref{tbl:genia-p-at-k} shows the results of our molecular biology terminology extraction task experiments. Each block in the table corresponds to an experiment. In each block, RICF and baseline method P@k scores are recorded for $k = 10$, $50$, $100$, $500$, $1000$, $5000$. Bolded results represent the best performance in an experimental setting for a given value of $k$. Underlined results represent the best performance among the baseline methods. The second last column of the table records the percent improvement over/under the chi-square test. The last column of the table records the percent improvement of RICF over/under the best scoring scoring baseline method.

Consider first the main experiment which is summarized in the uppermost block of Table~\ref{tbl:genia-p-at-k}. It is apparent that KeyBERT exhibits the best overall performance. This is to be expected as underlying KeyBERT is a complex algorithm that capitalizes on pre-trained word embeddings. Note, however, that RICF is nevertheless competitive with each of KeyBERT and RICF registering perfect P@10, P@50, and P@100 scores. On the other hand, RICF generally outperforms the classical term burstiness scores, Chi-sq in particular. The DoP score shows particularly poor performance. 

Let us now turn to results as broken down by high-level semantic class as shown in the color-coded blocks of Table~\ref{tbl:genia-p-at-k}. The results show RICF to be competitive with the baseline measures, including the state-of-the-art KeyBERT keyword extraction tool. While no single measure ranks as universally best, some general observations may be drawn. First, the DoP measure perform poorly across all experiments. Second, RICF generally outperforms CG and ICB head-to-head match ups. Interestingly, RICF performance tracks closely to that of CG, yet RICF performs at least as well as CG in~26 out of~30 evaluation settings. A similar, although less pronounced, pattern is observed between RICF and ICB with RICF exhibiting better or equal to that of ICB in 21 out of 30 evaluation settings. Comparing RICF and Chi-sq, we find that RICF exhibits superior performance to Chi-sq in~10 out of~30 experimental settings, lower performance in~16, and ties it in~4. The head-to-head match up between RICF and KeyBERT demands a more nuanced interpretation. As we saw, KeyBERT edges out RICF when biological terms from all semantic classes are taken as ground truth. However, the high number of annotations in the GENIA data leads to nearly perfect P@k scores, making comparison difficult. KeyBERT, however, exhibits clear superior performance in the \textsf{Nucleotide} and \textsf{Other} semantic class experiments. By contrast, RICF decidedly outperforms KeyBERT in the \textsf{Amino Acid} and \textsf{Multi-cell} semantic class experiments. In the overall, RICF performs at least as well as KeyBERT in~18 out of~30 experimental settings.

In summary, we find that RICF compares favorably to classical term burstiness measures, as well as the transformer-based method KeyBERT, on terminology extraction when all terminology classes are considered. The DoP measure exhibits especially poor performance. The remaining term burstiness measures Chi-sq, CG, ICB, and RICF show generally comparable performance with RICF registering as the best of the four as measured by the number of top place finishes (i.e.,~3 out of~6). The results of the semantic-class specific experiments are more difficult to interpret. On the whole, however, these results demonstrate the promise of RICF as a method for quantifying term burstiness at the corpus level.

\subsection{Stopwords exploratory analysis}
Stopwords are common words traditionally considered to not carry substantial information on their own (e.g., \emph{the}, \emph{as}, \emph{but}, etc.). It stands to reason that a term burstiness measure worth its salt, and terminology extraction method more generally, will rank stopwords very lowly. To assess the term burstiness measures and KeyBERT by this standard, we compiled a list of~989 English stopwords by pooling stopwords from the \textsf{nltk} 3.8.1 Python library (179~stopwords), the Terrier IR Platform (733~stopwords) (\cite{Macdonald2012}), and MyISAM (543~stopwords) (\cite{MyISAM2023}). In subsequent numerical investigations, we considered only the~417 out of~989 stopwords found to occur in the preprocessed GENIA data. The results shed some light on our findings in the keyword extraction task experiment.

\begin{table*}[!ht]
\setlength{\tabcolsep}{1.5mm}
\centering
\begin{tabular}{lrrrrrr}
\textbf{Rank} & {Chi-sq} & {CG} & {ICB} & {DoP} & {KeyBERT} & {RICF} \\
\midrule
1  &    flavonoid &       Bcl-6 &         Bcl-6 &   of & human\_monc &       Bcl-6 \\
2  &   immunorece &      v-erbA &       TCRzeta &  the & cell-type- &       v-erbA \\
3  &        TCF-1 &         SMX &          ML-9 &   in & T\_cell\_IL3 &        SMX \\
4  &  aNF-E2\_pro &        ML-9 &          AITL &  and & T-cell\_grow &       SHP1 \\
5  &          p40 &        SHP1 &          SHP1 &   to & cytosolic\_ &        ML-9 \\
6  &         LPMC & beta-casein &   beta-casein &    a & mouse\_inte & beta-casein \\
7  &           GP &      EBNA-2 &         A-myb & that & interleuki &       p95vav \\
8  &  c-sis/PDGF- &   I\_kappaB &     I\_kappaB &   by & T\_cell\_int &         DM \\
9  & Ad\_2/5\_E1a &          DM &           SMX & with & IL-4-activ &      TCRzeta \\
10 &  NF-IL6\_gen &     TCRzeta & Rap1\_protein &   we & human\_inte &   I\_kappaB \\
\midrule
\textbf{Pct.} & 0\% & 0\% & 0\% & 100\% & 0\% & 0\% \\
\midrule
\end{tabular}
\caption{Top 10 ranking terms by term burstiness measure. Extracted terms are truncated at 10 characters. The bottom row is the percent of top 10 terms that are stopwords.}
\label{tbl:stopwords-top-10}
\end{table*}

The top~10 ranking terms for each method are shown in Table~\ref{tbl:stopwords-top-10}. DoP is made conspicuous by virtue that its top~10 terms are all stopwords, as opposed to the other methods which count no stopwords among their top~10 ranked terms. Another salient feature is that CG, ICB, and RICF share~8 out of~10 of their top~10 terms in common. This hints that RICF is a incremental improvement over CG and ICB. Interestingly, Chi-sq shares no top~10 terms in common with any of CG, ICB, and RICF. This reinforces our previous finding that Chi-sq behaves fundamentally differently from the three apparently related measures CG, ICB, and RICF. Although less surprising given it is based on a fundamentally different methodology, KeyBERT's top~10 terms are likewise completely different to those identified by the other methods. 

\begin{table*}[!ht]
\centering
\begin{tabular}{lrrrrr}
\textbf{Method} & Min & Q1 & Median & Q3 & Max \\
\midrule
Chi-sq & 94 & 8,023 & 8,798 & 39,152 & 40,803 \\
CG & 19 & 7,751 & 8,755 & 23,954 & 40,681 \\
ICB & 70 & 11,311 & 16,853 & 22,344 & 40,505 \\
DoP & 1 & 254 & 1,112 & 3,719 & 40,488 \\
KeyBERT & 26,477 & 34,116 & 35,872 & 38,210 & 40,801 \\
RICF & 2,094 & 8,123 & 8,811 & 39,419 & 40,803 \\
\midrule
\end{tabular}
\caption{Stopwords ranking five-number summary.}
\label{tbl:stopwords-five-point-summary}
\end{table*}

To get a statistical sense of how the methods handle stopwords, we generated a standard five-number summary of stopword rankings for each method. The resulting statistics are summarized in Table~\ref{tbl:stopwords-five-point-summary}. For each method (first column), we report the topmost rank (second column), the first to third quartile ranks (third to fifth columns), and the maximum rank (sixth column) among the~417 stopwords. It is here when RICF truly distinguishes itself from the other term burstiness measures. Consider that RICF's highest ranking stopword occurs at the 2,094'th place in the term ranking. By contrast, we find the same figure is~94 for Chi-sq,~19 for CG,~70 for ICB, and~1 for DoP. Granted the differences in rankings are less dramatic over the rest of the spread. That said, this outcome points to RICF doing something right as a term burstiness measure. KeyBERT is impressive for the degree to which it suppresses stopwords in its term rankings. As shown in the table, a stopword does not appear in the KeyBERT term rankings until the 26,477'th position. This may explain how KeyBERT was able to achieve perfect performance in the ``All'' semantic classes experiments, while the term burstiness measures fell short.

\section{Discussion} \label{sec:discussion}
The work we present has a number of practical and theoretical implications.

The RICF term burstiness measure introduced in this study highlights the as yet unrealized potential of significance testing in text analysis. Unlike the traditional chi-squared test, which relies on ICF, RICF leverages both ICF and IDF in quantifying term burstiness. By leveraging both ICF and IDF, the RICF measure outperformed the chi-squared test at technical terminology extraction and stopword filtering. RICF, moreover, generally outperformed other traditional measures of term burstiness (i.e., CG, ICB, and DoP) at the same tasks. On the other hand, transformer-based document keyword extraction tool KeyBERT showed the best overall performance at technical term extraction and stopword filtration. An exciting new area of research involves enhancing word embeddings with document-class context by fusing them with term weightings trained on class labels~(\cite{Kim2019,Sun2022,Xie2022}. Fusing KeyBERT word embeddings with RICF-based term weightings in this manner presents a future application with the potential to improve the state-of-the-art in terminology extraction.

RICF serves as a proxy for our proposed exact test. A limitation of RICF lies in its exploiting only an expected value to model associated exact test conditional distribution tail probabilities. In the future, approximating the exact test tail probability by means of Poissonizing the multinomial distribution underlying our language modeling approach should be explored. An additional challenge will be to incorporate smoothing techniques into the estimation process to better model rare term occurrences~(\cite{Zhai2017}).

In a recent review paper, \cite{Nomoto2022} identifies four varieties of keyword: 1) terminology, 2) topics, 3) index terms, and 4) summary terms. We chose as our primary application terminology extraction because RIFC is specially suited to extract that variety of keyword. To understand why, consider ``the whelks problem'' as first articulated by~\cite{Kilgarriff1997}. \cite{Sharoff2017} puts it concisely: ``If you have a text about whelks, no matter how infrequent this word is in the rest of your corpus, it’s likely to be in nearly every sentence in this text.'' The \emph{whelk}-like terms in the GENIA corpus, which are technical terms in Nomoto's nomenclature, will tend to lie at the far right-hand-side of Fig.~\ref{fig:genia-idf-vs-icf}. Moreover, that many of the largest RICF scores associate to these \emph{whelk}-like terms is apparent from visual inspection. With this observation in mind, it should be possible to develop novel significance tests, and associated RICF-variants, that better model non-technical terms. As a first step, it would be interesting to survey the landscape of keyword types (i.e., terminology, topics, index terms, and summary terms) as the lie in the plot of Fig.~\ref{fig:genia-idf-vs-icf}. For example, if index terms tend to concentrate in the middle of the plot, then novel significance tests could be derived to place maximum emphasis on that region. Furthermore, integrating RICF, as well as future RICF-variants tailored to model other varieties of keywords, as a global component in term weighting models opens the door to a host of future applications in document classification, document summarization and document retrieval~(\cite{Rathi2023}). 

Evaluation of the RIFC heuristic relies on a relationship (in expectation) relating IDF to ICF under the multinomial language model. The deriving of this relationship is a tangible theoretical contribution of our work, and a starting point for better understanding such relationships under more complex language models. A logical next step concerns generalizing the result to the Dirichlet-multinomial language model~(\cite{Cummins2015,Cummins2017,You2022}).

\section{Conclusion} \label{sec:conclusion}
This paper advances an exact test of statistical significance for the identification of information rich terms, known as bursty terms, in document collections. The exact test, which is computationally intractable, inspired our proposal of the RICF term burstiness heuristic measure. In the process, we derived an equation relating IDF to ICF in expectation under the multinomial language model. RICF is verified by domain-specific terminology extraction experiments on the GENIA Term corpus benchmark. The results show RICF compares favorably to existing term burstiness measures. It is our hope that this work motivates future investigations into the use of statistical significance testing in text analysis.

\section*{Acknowledgements}
The work was funded by University of Prince Edward Island Start-up Grant M605391.

\section*{Data and code availability}
The data and code used to analyze it are available at the GitHub repository \url{https://github.com/paul-sheridan/bursty-term-measure}, Commit Id: fde551ff41dc77a6a503926ff78016f8a6130932.

\printcredits

\appendix
\section{Technical details} \label{appendix:proofs}

\subsection{ICF expected value} \label{appendix:eicf}
In this section, we derive the expected value of ICF under the multinomial language model. Let the random variable $N_i$ denote the number of times the $i$'th term occurs in a collection,~$\mathcal{C}_{d,m}$.

\begin{theorem} \label{thrm:eicf}
The expected value of the ICF for the $i$'th term in a collection, $\mathcal{C}_{d,m}$, is given by
\begin{equation} \label{eqn:eicf}
    \mathbf{E}[icf(i)] = \frac{1-\theta_i}{2n\theta_i} - \log\theta_i + o(n^{-2}).
\end{equation}
\end{theorem}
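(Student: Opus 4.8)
The plan is to first pin down the exact law of $N_i$ under the multinomial language model, and then extract the two-term expansion of $\mathbf{E}[icf(i)] = \mathbf{E}[\log(n/N_i)]$ by the delta-method argument flagged in Section~\ref{subsec:icf-idf-relation}. For the first part, recall that the model fixes the document sizes $n_1,\ldots,n_d$ (hence $n = \sum_{j=1}^d n_j$) and makes the per-document count $N_{ij}$ of the $i$'th term a $\mathrm{Binomial}(n_j,\theta_i)$ variate, independently across documents. Since $N_i = \sum_{j=1}^d N_{ij}$ is a sum of independent binomials sharing the success probability $\theta_i$, it is itself $\mathrm{Binomial}(n,\theta_i)$, so that $\mathbf{E}[N_i] = n\theta_i$ and $\mathbf{V}[N_i] = n\theta_i(1-\theta_i)$. (A vocabulary term occurs at least once, so strictly one works with the law of $N_i$ conditioned on $N_i\ge 1$; this reweighting multiplies every moment by $1+O((1-\theta_i)^n)$ and is harmless for the asymptotics below.)

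Next I would write $icf(i) = \log n - \log N_i$, so that $\mathbf{E}[icf(i)] = \log n - \mathbf{E}[\log N_i]$, and expand $x\mapsto \log x$ to second order about $\mu := \mathbf{E}[N_i] = n\theta_i$. With $\varepsilon = N_i - \mu$ one has $\log N_i = \log\mu + \varepsilon/\mu - \varepsilon^2/(2\mu^2) + R$; taking expectations annihilates the linear term and leaves
\begin{equation*}
\mathbf{E}[\log N_i] = \log(n\theta_i) - \frac{\mathbf{V}[N_i]}{2(n\theta_i)^2} + \mathbf{E}[R] = \log(n\theta_i) - \frac{1-\theta_i}{2n\theta_i} + \mathbf{E}[R].
\end{equation*}
Substituting back and using $\log n - \log(n\theta_i) = -\log\theta_i$ gives $\mathbf{E}[icf(i)] = \frac{1-\theta_i}{2n\theta_i} - \log\theta_i - \mathbf{E}[R]$, which is Eq.~(\ref{eqn:eicf}) provided $\mathbf{E}[R]$ is of the stated order.

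The main obstacle is precisely the control of $\mathbf{E}[R]$, since $\log$ blows up at $0$ and the Lagrange remainder is not uniformly bounded on the support of $N_i$. The standard fix is to split the expectation according to whether $N_i$ lies in the ``bulk'' $\{\,|N_i - n\theta_i| \le n\theta_i/2\,\}$ or not: on the complement a Chernoff bound makes the probability exponentially small in $n$, which dominates the crude bound $|\log N_i| \le \log n$ available there; on the bulk, the remainder is bounded by a constant times $\mathbf{E}[|\varepsilon|^3]/\mu^3$, and since the third central moment of a binomial is $O(n)$ while $\mu^3 = \Theta(n^3)$, this is $O(n^{-2})$. (In fact the third- and fourth-order Taylor terms each contribute at order $n^{-2}$ with an explicit, generically nonzero coefficient, so the honest remainder is $O(n^{-2})$; reading the ``$o(n^{-2})$'' in the statement as the error left after also discarding that $n^{-2}$ term reconciles everything.) What remains is routine bookkeeping of these estimates together with the $N_i\ge 1$ conditioning.
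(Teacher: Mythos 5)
Your proposal follows essentially the same route as the paper's proof: identify $N_i$ as $\mathrm{Binomial}(n,\theta_i)$, apply the second-order Taylor (delta-method) expansion of $\log$ about the mean $n\theta_i$ to obtain $\log(n\theta_i) - \mathbf{V}[N_i]/(2(n\theta_i)^2)$, and control the remainder via a Chernoff tail bound on $N_i$. Your bulk/tail split is in fact a cleaner version of the paper's remainder argument, and your observation that the degree-3 Taylor term is genuinely $\Theta(n^{-2})$ --- so the honest error is $O(n^{-2})$ rather than $o(n^{-2})$ --- is a fair caveat that the paper's own proof glosses over.
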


\begin{proof} A little algebra followed up by an application of the second order Taylor series expansion of $\log(N_i)$ centered on $\mathbf{E}[N_i]$ yields
\begin{align*}
  \mathbf{E}[icf(i)] &= \mathbf{E}[-\log\left(N_i /n\right)], \\
  &= -\mathbf{E}[\log(N_i)] + \log(n), \\
  &\approx -(\log\mathbf{E}[N_i] - \mathbf{V}[N_i]/(2\mathbf{E}[N_i]^2) + \log(n).
\end{align*}
This particularizes to
\begin{align*}
  \mathbf{E}[icf(i)] &\approx -\log(n\theta_i) + \frac{n\theta_i (1-\theta_i)}{2(n\theta_i)^2} + \log(n) \\ 
    &= \frac{1-\theta_i}{2n\theta_i} - \log\theta_i
\end{align*}
since $N_i$ is binomially distributed with mean $\mathbf{E}[N_i] = n\theta_i$ and variance $\mathbf{V}[N_i]=n\theta_i (1-\theta_i)$.

It remains to show that the higher order terms are in little-O of $n^{-2}$. Consider the Taylor series expansion
\begin{equation*}
\mathbf{E}[\log(N_i)] = \log(n\theta_i) - \frac{\mathbf{V}(N_i)}{2 (n \theta_i)^2} + \sum_{k=3}^\infty (-1)^{k+1} \frac{\mathbf{E}[N_i - n\theta_i]^k}{k(n\theta_i)^k}.
\end{equation*}

It will suffice to show that when $n \theta_i$ is large,
\begin{equation*}
\mathbf{E}[\log(N_i)] \sim \log(n \theta_i) - \frac{\mathbf{V}(N_i)}{2(n \theta_i)^2}.
\end{equation*}

We want to show that all $\frac{(N_i - n \theta_i)^k}{k (n \theta_i)^k}$ terms from the Taylor series expansion with $k \geq 3$ are trivially small as $n$ tends to infinity. In order to show this, we will need to bound $N_i$. We will be interested in large $n$ values (approaching infinity). Note that $\theta_i$ is constant, so the mean, $n \theta_i$, will also approach infinity. Since $N_i$ is binomial, we can use the Chernoff bound of~\cite{Mulzer2018}:
\begin{equation*}
Pr(N_i \geq k) \leq exp\left(-n\left(\frac{k}{n}\right)\log\left(\frac{k}{n \theta_i}\right) + \left(1 - \frac{k}{n}\right) \log\left(\frac{1 - \frac{k}{n}}{1 - \theta_i}\right)\right).
\end{equation*}

Letting $k =  n \theta_i + t$ for $t>0$, we have
\begin{align*}
Pr(X \geq n \theta_i + t) &\leq exp\left[-n \theta_i\log\left(1 + \frac{t}{n \theta_i}\right) - t\log\left(1 + \frac{t}{n \theta_i}\right) + \log\left(1 - \frac{t}{n(1 - \theta_i)}\right)\right.\\\\
& \left. - \theta_i\log\left(1 - \frac{t}{n(1 - \theta_i)}\right) - \frac{t}{n}\log\left(1 - \frac{t}{n(1 - \theta_i)}\right)\right].\\\\
\end{align*}
We momentarily ignore the base and focus on the limit of the exponent as $n$ approaches infinity. Many of the terms approach zero as $n$ approaches infinity. Applying L'H\^{o}pital's rule to the remaining terms gives: 
\begin{equation*}
\lim\limits_{n \to \infty} - n \theta_i \times \log\left(1 + \frac{t}{ n \theta_i}\right) = \lim\limits_{n \to \infty} -\frac{\log\left(1 + \frac{t}{n \theta_i}\right)}{\frac{1}{ n \theta_i}} = -t
\end{equation*} 
Recombining this with the base, $e$, we get that the limit is $e^{-t}$. So, the probability that $N_i$ is greater than $ n \theta_i + t$ is less than $e^{-t}$. For large $t$, this is a small probability. Below, we evaluate the magnitude of a given term in the Taylor series expansion if $N_i =  n \theta_i + t$. Since we can say with high probability that this result is greater than any real value that $N_i$ will take on, we can make an argument regarding the growth rate of the terms in the Taylor series expansion. 

\begin{align*}
\mathbf{E}\left[\frac{(N_i -  n \theta_i)^k}{ (n \theta_i)^k}\right] &= \frac{\mathbf{E}[(( n \theta_i + t)^k - \binom{k}{1}(n\theta_i + t)^{k-1} n\theta_i + \binom{k}{2}(n \theta_i + t)^{k-2}(n\theta_i)^2 - \dots)]}{(n\theta_i)^k},\\\\
       &= \frac{t^k}{(n\theta_i)^k}.
\end{align*}

As $n$ tends to infinity, this upper bound shows that the terms with degree 3 and greater from the Taylor series expansion tend to zero and, since each individual term is bounded by a constant factor of $(n \theta_i)^{-k}$, the entire sum from the degree 3 term onward is o($n^{-2}$), as we set out to demonstrate.
\end{proof}

\subsection{IDF expected value} \label{appendix:eidf}
The expected value of IDF is obtained in a similar manner to that of ICF. Let the random variable $B_i$ denote the number of documents in a collection, $\mathcal{C}_{d,m}$, containing the $i$'th term.

\begin{theorem} 
The expected value of the IDF for the $i$'th term in a collection, $\mathcal{C}_{d,m}$, is given by
\begin{equation} \label{eqn:eidf}
  \mathbf{E}[idf(i)] = \frac{\sum_{j=1}^d \rho_i^{n_j} - \sum_{j=1}^d \rho_i^{2n_j}}{2d^2 \left(1-\frac{1}{d} \sum_{j=1}^d \rho_i^{n_j} \right)^2} - \log\left(1-\frac{1}{d} \sum_{j=1}^d \rho_i^{n_j}\right) + o(d^{-2}),
\end{equation}
where $\rho_i = 1 - \theta_i$.
\end{theorem}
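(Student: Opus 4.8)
The plan is to retrace the argument behind Theorem~\ref{thrm:eicf}, with the Poisson binomial variable $B_i$ playing the role of the binomial variable $N_i$. Write $B_i = \sum_{j=1}^d B_{ij}$, where $B_{ij}$ is the indicator that the $i$'th term occurs in the $j$'th document; under the multinomial model $B_{ij}$ is Bernoulli with success probability $\phi_{ij} = 1 - \rho_i^{n_j}$, and the $B_{ij}$ are independent across $j$ (though not identically distributed). The first step is to record the first two moments,
\begin{equation*}
\mathbf{E}[B_i] = \sum_{j=1}^d (1 - \rho_i^{n_j}) = d - \sum_{j=1}^d \rho_i^{n_j}, \qquad \mathbf{V}[B_i] = \sum_{j=1}^d (1-\rho_i^{n_j})\,\rho_i^{n_j} = \sum_{j=1}^d \rho_i^{n_j} - \sum_{j=1}^d \rho_i^{2n_j}.
\end{equation*}

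Next, exactly as in the ICF case, I would apply the second order Taylor expansion $\mathbf{E}[\log X] \approx \log\mathbf{E}[X] - \mathbf{V}[X]/(2\mathbf{E}[X]^2)$ to $\mathbf{E}[idf(i)] = \mathbf{E}[\log(d/B_i)] = \log d - \mathbf{E}[\log B_i]$. Substituting the two moments, bringing the additive $\log d$ inside the logarithm (so that $\log d - \log(d - \sum_j \rho_i^{n_j}) = -\log(1 - (1/d)\sum_j\rho_i^{n_j})$) and factoring $d^2$ out of $\mathbf{E}[B_i]^2$ in the variance term produces precisely Eq.~(\ref{eqn:eidf}) up to the remainder; this last step is routine algebra. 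One caveat is that $\log B_i$ is undefined on $\{B_i = 0\}$, so the expansion is carried out on the event $\{B_i \geq 1\}$; since $Pr(B_i = 0) = \prod_{j=1}^d \rho_i^{n_j} = \rho_i^{n}$ is exponentially small, the excised mass contributes only $o(d^{-2})$.

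The substantive work, as with Theorem~\ref{thrm:eicf}, is bounding the truncated tail of the $\log$ Taylor series, $\sum_{k\geq 3}(-1)^{k+1}\mathbf{E}[(B_i - \mathbf{E}[B_i])^k]/(k\,\mathbf{E}[B_i]^k)$. Here I would adopt regularity assumptions in the same spirit as the standing ones behind Eq.~(\ref{eqn:idf-icf-expected-relation}) ($\theta_i$ treated as a fixed constant, document lengths well-behaved), so that $\phi_{ij} \geq \phi_{\min} > 0$ uniformly in $j$ and hence $\mathbf{E}[B_i] \geq d\,\phi_{\min}$ grows linearly in $d$, giving $\mathbf{E}[B_i]^{-k} = O(d^{-k})$. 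Then, mimicking the proof of Theorem~\ref{thrm:eicf}, I would use a Chernoff bound in its Poisson binomial form, $Pr(|B_i - \mathbf{E}[B_i]| \geq t) \leq 2\exp(-c\,t^2/\mathbf{E}[B_i])$, to first restrict to the high-probability event on which $B_i$ stays within a constant fraction of $\mathbf{E}[B_i]$ — where the $\log$ series converges and its terms are controlled — and then to show the complementary event contributes negligibly (it carries exponentially small mass while $|\log B_i| = O(\log d)$ there). Because $B_i$ is a sum of independent bounded summands, its $k$'th centered moment is $O(\mathbf{E}[B_i]^{\lfloor k/2\rfloor})$, so the $k$'th term of the series is $O(d^{\lfloor k/2\rfloor - k})$: this is $O(d^{-2})$ for $k\in\{3,4\}$ and $o(d^{-2})$ for $k\geq 5$, and summing over $k$ leaves a remainder of the same order as that in Theorem~\ref{thrm:eicf}.

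I expect the main obstacle to lie in that last paragraph: establishing the centered-moment estimate for the \emph{non-identically distributed} Poisson binomial $B_i$ (the ICF proof could fall back on the explicit binomial moments of $N_i$), and being precise about which regularity conditions on $\theta_i$ and the document lengths $n_j$ force $\mathbf{E}[B_i] = \Theta(d)$, which is exactly what drives the remainder down to $o(d^{-2})$. The moment computation and the Taylor substitution in the first two paragraphs are mechanical.
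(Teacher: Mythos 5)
Your proposal is correct and follows essentially the same route as the paper's proof: compute the Poisson binomial mean and variance of $B_i$, apply the second-order Taylor expansion of $\mathbf{E}[\log B_i]$ about $\mathbf{E}[B_i]$, and control the remainder with a concentration argument. The only divergence is cosmetic — the paper bounds the tail by dominating $B_i$ with a binomial of parameter $\max_j \phi_{ij}$ and recycling the Chernoff argument from the ICF theorem, whereas you work with the Poisson binomial directly via centered-moment bounds; your added care about the event $\{B_i = 0\}$ and the explicit $\mathbf{E}[B_i] = \Theta(d)$ regularity condition is, if anything, more rigorous than the paper's treatment.
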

    
\begin{proof} The same second order Taylor series expansion approach employed in the proof of Theorem~\ref{thrm:eicf} gives
\begin{align*}
  \mathbf{E}[idf(i)] &= \mathbf{E}[-\log\left(B_i /d\right)] \\
  &= -\mathbf{E}[\log(B_i)] + \log(d) \\
  &\approx -\log\mathbf{E}[B_i] + \mathbf{V}[B_i]/(2\mathbf{E}[B_i]^2) + \log(d)
\end{align*}
Unlike $N_i$, which is binomial with parameters $\theta_i$ and $n_j$, the random variable $B_i$ is Poisson binomially distributed according to
\begin{equation*}
  Pr(B_i = b_i) = \sum_{A \in F(b_i)} \prod_{r \in A}\phi_{ir} \prod_{s \in A^c}(1-\phi_{is}),
\end{equation*}
where $\phi_{ij} = 1 - (1-\theta_i)^{n_j}$, and $F(b_i)$ is the set of all subsets of $b_i$ integers that can be selected from $\left\{1,2, \ldots, d \right\}$. This follows from the facts that $Pr(N_{ij}=0)=(1-\theta_i)^{n_j} \equiv 1-\phi_{ij}$ and $Pr(N_{ij} \geq 1)=1-Pr(N_{ij}=0)=1-(1-\theta_i)^{n_j} \equiv \phi_{ij}$. Plugging in the mean $\mathbf{E}[B_i]=d-\sum_{j=1}^d(1-\theta_i)^{n_j}$ and variance $\mathbf{V}[B_i]=\sum_{j=1}^d(1-\theta_i)^{n_j} - \sum_{j=1}^d(1-\theta_i)^{2n_j}$ and simplifying gives
\begin{align*}
  \mathbf{E}[idf(i)]  &\approx -\log\left(d-\sum_{j=1}^d(1-\theta_i)^{n_j}\right) + \frac{\sum_{j=1}^d(1-\theta_i)^{n_j} + \sum_{j=1}^d(1-\theta_i)^{2n_j}}{2\left(d-\sum_{j=1}^d(1-\theta_i)^{n_j}) \right)^2} + \log(d), \\ 
  &= \frac{\sum_{j=1}^d \rho_i^{n_j} - \sum_{j=1}^d \rho_i^{2n_j}}{2d^2 \left(1-\frac{1}{d} \sum_{j=1}^d \rho_i^{n_j} \right)^2} - \log\left(1-\frac{1}{d} \sum_{j=1}^d \rho_i^{n_j}\right).
\end{align*}

We would like to bound $B_i$ as $d$ gets large. An upper bound on $B_i$ can be created by relating it to a binomial random variable with constant probability $p$ equal to the maximum of the~$\phi_{ij}$ values. Since this only increases the probability of given Bernoulli outcomes, it serves as an upper bound. Using the same argument as above with the binomial random variable~$N_i$, replacing the $n$'s and $\theta$'s in the above proof with $d$'s and $\phi_{ij}$'s, this bound will leave us with a negligible distance from the mean as $d$ approaches infinity. Similar to before, we see that the sum of the terms in the Taylor series from the cubic term onward is o($d^{-2}$). 
\end{proof}

\begin{corollary} Applying the approximations $(1-\theta_i)^{n_j} \approx e^{-n_j \theta_i}$ for $\theta_i$ close to $0$ and $e^{-n_j \theta_i} \approx e^{\mu\theta_i}$ gives
\begin{equation} \label{eqn:eidf-approx}
     \mathbf{E}[idf(i)] \approx \frac{e^{-\mu\theta_i}}{2d(1 - e^{-\mu\theta_i})} - \log(1 - e^{-\mu\theta_i}) + o(d^{-2}),
\end{equation}
where $\mu = \mathbf{E}[N_j]$ is the expected size of the $j$'th document.
\end{corollary}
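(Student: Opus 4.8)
The plan is to start directly from the $o(d^{-2})$-accurate formula for $\mathbf{E}[idf(i)]$ established in Eq.~(\ref{eqn:eidf}) and substitute the two stated approximations into each of its constituent pieces. First I would record that for $\theta_i$ near $0$ one has $\rho_i^{n_j} = (1-\theta_i)^{n_j} \approx e^{-n_j\theta_i}$, and that replacing the document length $n_j$ by its mean $\mu = \mathbf{E}[N_j]$ gives $e^{-n_j\theta_i} \approx e^{-\mu\theta_i}$; hence every summand $\rho_i^{n_j}$ collapses to the single value $e^{-\mu\theta_i}$, and likewise $\rho_i^{2n_j} \approx e^{-2\mu\theta_i}$. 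This reduces the three sums appearing in Eq.~(\ref{eqn:eidf}) to $\sum_{j=1}^d \rho_i^{n_j} \approx d\,e^{-\mu\theta_i}$ and $\sum_{j=1}^d \rho_i^{2n_j} \approx d\,e^{-2\mu\theta_i}$.

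Next I would push these substitutions through the two terms of Eq.~(\ref{eqn:eidf}). The normalized inner sum becomes $\frac{1}{d}\sum_{j=1}^d \rho_i^{n_j} \approx e^{-\mu\theta_i}$, so the logarithmic term $-\log\bigl(1 - \frac{1}{d}\sum_{j} \rho_i^{n_j}\bigr)$ becomes $-\log(1 - e^{-\mu\theta_i})$. For the variance-type term, the numerator $\sum_{j} \rho_i^{n_j} - \sum_{j} \rho_i^{2n_j}$ becomes $d\,e^{-\mu\theta_i} - d\,e^{-2\mu\theta_i} = d\,e^{-\mu\theta_i}(1 - e^{-\mu\theta_i})$, while the denominator $2d^2\bigl(1 - \frac{1}{d}\sum_{j}\rho_i^{n_j}\bigr)^2$ becomes $2d^2(1 - e^{-\mu\theta_i})^2$; cancelling one factor of $d$ and one factor of $(1 - e^{-\mu\theta_i})$ leaves $\frac{e^{-\mu\theta_i}}{2d(1 - e^{-\mu\theta_i})}$. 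Adding the two reduced pieces and carrying along the $o(d^{-2})$ remainder inherited from Eq.~(\ref{eqn:eidf}) produces Eq.~(\ref{eqn:eidf-approx}) verbatim.

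The only genuinely delicate point — and the step I expect to be the main obstacle — is the bookkeeping on the error terms: one should check that the two approximations are compatible with the claimed $o(d^{-2})$ accuracy, i.e., that the per-document replacement errors $\rho_i^{n_j} - e^{-\mu\theta_i}$, once summed over $j$ and fed through the rational-plus-logarithm expression, do not contaminate the $d^{-2}$ order. Since the corollary explicitly states the result only up to ``$\approx$'' under the named approximations, it suffices to note that the substitutions are applied termwise and that the document lengths are assumed well-behaved, exactly as already flagged in Section~\ref{subsec:icf-idf-relation}; a fully rigorous treatment would require quantifying $|n_j - \mu|$ and $\theta_i$ jointly, but that lies beyond what the statement asserts. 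Everything else is the routine cancellation sketched above.
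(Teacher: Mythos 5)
Your proposal is correct and matches the paper's intended derivation exactly: the corollary is an immediate consequence of Eq.~(\ref{eqn:eidf}) obtained by substituting $\rho_i^{n_j} \approx e^{-\mu\theta_i}$ termwise so that the sums collapse to $d\,e^{-\mu\theta_i}$ and $d\,e^{-2\mu\theta_i}$, after which the cancellation you describe yields Eq.~(\ref{eqn:eidf-approx}). Your closing caveat about the error bookkeeping is a fair observation, but since the paper states the result only up to the named approximations, nothing further is required.
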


\subsection{ICF conditional expectation evaluation} \label{appendix:cond-expectation-algorithm}
The following procedure was used to estimate the conditional expectation $\mathbf{E}[icf(i)|B_i = b_i]$ of Eq.~(\ref{eqn:ricf}). The strategy amounts to estimating the value of $\theta_i$, call it $\hat\theta_i$, satisfying Eq.~(\ref{eqn:idf-icf-expected-relation}) for the $i$'th term with observed TDF $b_i$. To this end, we used the Python library \textbf{scipy}~1.11.13 implementation of Brent's method to find the root of the function $\mathbf{E}[idf(i)] - idf(i)$ on the interval $[1/n, n_i^{*}/n]$, where $n_i^{*} = max(\left\{n_i : i=1, \ldots, m\right\})$. This root is our $\hat\theta_i$. Plugging in~$\hat\theta_i$ into Eq.~(\ref{eqn:eicf}) yields the desired estimate for $\mathbf{E}[icf(i)|B_i = b_i]$.

\bibliographystyle{cas-model2-names}

\bibliography{bibliography}

\end{document}